\documentclass[conference]{IEEEtran}
\pdfoutput=1
\IEEEoverridecommandlockouts
\usepackage{cite}
\usepackage{amsmath,amssymb,amsfonts}
\usepackage{amsthm}
\usepackage{algorithmic}
\usepackage{graphicx}
\usepackage{epstopdf}
\usepackage{textcomp}
\usepackage{xcolor}
\usepackage{setspace}

\def\BibTeX{{\rm B\kern-.05em{\sc i\kern-.025em b}\kern-.08em
    T\kern-.1667em\lower.7ex\hbox{E}\kern-.125emX}}

\theoremstyle{definition}
\newtheorem*{defn*}{\protect\definitionname}
 \theoremstyle{plain}
 \newtheorem*{thm*}{\protect\theoremname}
 \theoremstyle{definition}
 \newtheorem{defn}{\protect\definitionname}
\theoremstyle{plain}
\newtheorem{thm}{\protect\theoremname}

\makeatother

 \providecommand{\definitionname}{Definition}
 \providecommand{\theoremname}{Theorem}
\providecommand{\theoremname}{Theorem}

\usepackage[unicode=true]{hyperref}

\begin{document}

\title{Generalized Random Surfer-Pair Models
}

\author{\IEEEauthorblockN{Sai Kiran Narayanaswami}
\IEEEauthorblockA{\begin{minipage}{0.5\columnwidth}
  \begin{center}
    \textit{Robert Bosch Centre for Data Science and AI,}
  \end{center}
\end{minipage}\\\textit{Indian Institute of Technology Madras} \\
Chennai, India \\
saikirann94@gmail.com}
\and
\IEEEauthorblockN{Balaraman Ravindran}
\IEEEauthorblockA{\textit{Dept. of Computer Science and Engineering}\\
\begin{minipage}{0.5\columnwidth}
  \begin{center}
    \textit{Robert Bosch Centre for Data Science and AI,}
  \end{center}
\end{minipage}\\
\textit{Indian Institute of Technology Madras}\\
Chennai, India \\
ravi@cse.iitm.ac.in}
\and
\IEEEauthorblockN{Venkatesh Ramaiyan}
\IEEEauthorblockA{\textit{Dept. of Electrical Engineering} \\
\begin{minipage}{0.5\columnwidth}
  \begin{center}
    \textit{Robert Bosch Centre for Data Science and AI,}
  \end{center}
\end{minipage}\\
\textit{Indian Institute of Technology Madras}\\
Chennai, India \\
rvenkat@ee.iitm.ac.in}
}

\maketitle

\begin{abstract}
SimRank is a widely studied link-based similarity measure that is
known for its simple, yet powerful philosophy that two nodes are similar
if they are referenced by similar nodes. While this philosophy has
been the basis of several improvements, there is another useful, albeit
less frequently discussed interpretation for SimRank known as the
Random Surfer-Pair Model. In this work, we show that other well known
measures related to SimRank can also be reinterpreted using Random
Surfer-Pair Models, and establish a mathematically sound, general
and unifying framework for several link-based similarity measures.
This also serves to provide new insights into their functioning and
allows for using these measures in a Monte Carlo framework, which
provides several computational benefits. Next, we describe how the
framework can be used as a versatile tool for developing measures
according to given design requirements. As an illustration of this
utility, we develop a new measure by combining the benefits of two
existing measures under this framework, and empirically demonstrate
that it results in a better performing measure.
\end{abstract}

\section{Introduction}

Advances in computing and information systems have enabled the collection
of data on a large scale. A sizable portion is available in network
form, with some well known examples being social networks and the
World Wide Web. Naturally, the need arises to effectively utilize
this data in the relevant domains. One area where network data has
enormous potential is in recommender systems.

With the advent of electronic commerce, data is often available for
products that customers have purchased, which can be used for collaborative
filtering to provide recommendations for further purchases. As human
knowledge grows, leading to rapidly expanding bodies of literature,
there is an increasing requirement for effective recommender systems
to aid researchers. Thus, utilizing bibliographic network data such
as citation and co-authorship networks to provide recommendations
is of rising importance.

In general, some form of similarity assessment would be a key component
of recommender systems. We consider one class of similarity measures
that work with only the link structure of the network known as \emph{structural}
(or \emph{link-based}) similarity measures. Among the first such well
known measures are Co-Citation \cite{cocitation} and its counterpart
Bibliographic coupling \cite{bibcoupling} that measure respectively
the frequency with which nodes refer to two given nodes , and the
frequency with which nodes are referenced together by two given nodes.
Amsler \cite{amsler1972} is a combination of the former two measures.
SimRank \cite{simrank} with its intuitive graph-theoretic model was
a breakthrough which formed the basis of several subsequent measures.
Among the notable ones are P-Rank \cite{Zhao:2009:PCS:1645953.1646025},
PSimRank \cite{psimrank} and SimRank{*} \cite{simrank-star}.

At the focus of this work is the probabilistic interpretation of SimRank
known as the Random Surfer-Pair (RSP) Model \cite{simrank}. We develop
a generalization of this interpretation that applies to several measures
and present a unified view of these measures. Casting them under the
Generalized Random Surfer-Pair (GRSP) model also provides new insights
into their functioning: with P-Rank, the GRSP model brings to light
an otherwise indiscernible peculiarity, and for SimRank{*} provides
an elegant and intuitive justification. We show how the framework
also provides the tools necessary for performing Monte Carlo computation
of these measures and discuss the computational benefits that result.
Next, we outline the ways in which the GRSP model can be used to design
measures, and the considerations involved in its usage for developing
new measures. We then apply the framework to develop a hybrid measure,
PSimRank{*} to combine the benefits of two existing measures, PSimRank
and SimRank{*}. Experiments are performed to demonstrate empirically
how this combination improves upon both of the measures on which it
is based in terms of retrieval efficiency on a large (more than 2
million nodes) citation network from the Arnetminer dataset.

\section{Background}

In this section, we present SimRank and other relevant link-based
similarity measures, and then the RSP interpretation for SimRank.
Throughout, we consider networks with directed, simple graphs. We
denote the graph by $G$, its vertex set by $V$, and its edge set
by $E$. For any vertex $a$, $I\left(a\right)$ denotes its in-neighbors
and $O\left(a\right)$ its out-neighbors. We index into these sets
as $I_{i}\left(a\right)$ and $O_{i}\left(a\right)$.

\subsection{SimRank and related measures}

The famous SimRank philosophy that two nodes are similar if they are
referenced by similar nodes is formalized recursively as:
\begin{equation}
s\left(a,b\right)=\frac{C}{|I(a)||I(b)|}\sum_{i=1}^{|I(a)|}\sum_{j=1}^{|I(b)|}s\left(I_{i}(a),I_{j}(b)\right)\label{eq:sr_rec}
\end{equation}

where $a\neq b$ and $C<1$ is a fixed, positive parameter. In the
rare cases where either of the nodes $a$ and $b$ have no in-neighbors,
the similarity is considered to be zero. Maximal self similarity applies,
with $s\left(x,x\right)=1$ $\forall x\in V$. These two base cases
also apply for other measures discussed subsequently and nodes $a$
and $b$ on the left hand side are assumed to be distinct except as
specified otherwise.

An out-link variant of SimRank, rvs-SimRank is also described in \cite{struct-sim-survey},
which has the following form:
\begin{equation}
s\left(a,b\right)=\frac{C}{|O(a)||O(b)|}\sum_{i=1}^{|O(a)|}\sum_{j=1}^{|O(b)|}s\left(O_{i}(a),O_{j}(b)\right)\label{eq:rvs_sr_rec}
\end{equation}

Some deficiencies that have been identified in SimRank are as follows
\cite{struct-sim-survey} :
\begin{itemize}
\item The In Links Consideration Problem : SimRank is unavailable (i.e set
to zero) when either node has no in-neighbors, even though there may
be evidence of similarity in the out-neighbors.
\item The Pairwise Normalization Problem : This is the counter-intuitive
effect that the SimRank score of a pair of nodes can \emph{decrease}
as there are more and more nodes referring to both of them. Consider
two nodes $u$ and $v$ having several (say $k$) nodes that refer
to both of them. Now, if these nodes are unrelated to each other (i.e
they have zero similarity), the SimRank score between $u$ and $v$
is found to be $\frac{C}{k}$, which \emph{decreases} with $k$. Thus,
even though there are more witnesses to the similarity of $u$ and
$v$, their SimRank score decreases.
\item The Level-wise computation problem : It can be shown that SimRank
is unavailable for node pairs that don't have any paths of \emph{equal}
length to a common node. That is, it discards any evidence of similarity
provided by path pairs of unequal length to a common node.
\end{itemize}
We now present some measures that were developed to address these
issues.

\subsubsection{P-Rank}

P-Rank (\cite{Zhao:2009:PCS:1645953.1646025}) was proposed to take
into account out-links as well in computing similarity. It has the
following recursive form :
\begin{multline}
s\left(a,b\right)=\lambda\times\frac{C}{|I(a)||I(b)|}\sum_{i=1}^{|I(a)|}\sum_{j=1}^{|I(b)|}s\left(I_{i}(a),I_{j}(b)\right)\\
+(1-\lambda)\times\frac{C}{|O(a)||O(b)|}\sum_{i=1}^{|O(a)|}\sum_{j=1}^{|O(b)|}s\left(O_{i}(a),O_{j}(b)\right)\label{eq:prank_rec}
\end{multline}

It essentially adds an additional clause to the SimRank philosophy:
\emph{two entities are also similar if they reference similar entities}.

The base cases are similar to those of SimRank, except that only the
term that corresponds to in (out) neighbors gets zeroed out if one
or both of $\left(a,b\right)$ doesn't have in (out) neighbors. This
ensures that P-Rank is not unavailable for node pairs without in-links
as is the case with SimRank, as long as they have out links.

\subsubsection{PSimRank}

The recursive form of PSimRank (\cite{psimrank}) is as follows :
\begin{align}
s\left(a,b\right)= & \frac{C|I\left(a\right)\cap I\left(b\right)|}{|I\left(a\right)\cup I\left(b\right)|}\cdot1\nonumber \\
+ & \frac{C}{|I\left(a\right)\cup I\left(b\right)||I\left(b\right)|}\sum_{\substack{a^{'}\in I\left(a\right)\setminus I\left(b\right)\\
b^{'}\in I\left(b\right)
}
}s\left(a^{'},b^{'}\right)\label{eq:psim_rec}\\
+ & \frac{C}{|I\left(a\right)\cup I\left(b\right)||I\left(a\right)|}\sum_{\substack{b^{'}\in I\left(b\right)\setminus I\left(a\right)\\
a^{'}\in I\left(a\right)
}
}s\left(a^{'},b^{'}\right)\nonumber 
\end{align}

PSimRank solves the pairwise normalization problem by assigning greater
importance to common in-neighbors. It is evident that unlike SimRank
where each term $s\left(a,b\right)$ has the same weight, the weights
have been redistributed so that terms of the form $s\left(x,x\right)$
(which are always 1, and make up the constant term) are given more
weight than all other terms.

\subsubsection{SimRank{*}}

SimRank{*} (\cite{simrank-star}) was proposed to solve the level-wise
computation problem of SimRank and has the following recursive form:
\begin{equation}
s\left(a,b\right)=\frac{C}{2|I(a)|}\sum_{i=1}^{|I(a)|}s\left(I_{i}(a),b\right)+\frac{C}{2|I(b)|}\sum_{i=1}^{|I(b)|}s\left(a,I_{i}\left(b\right)\right)\label{eq:simrank-star-rec}
\end{equation}

The measure is derived in \cite{simrank-star} by actually enumerating
all pairs of paths of (possibly) unequal length from $a$ and $b$
to a common node and computing the weighted sum of an exponentially
decayed score associated with each path. Later on, we present a much
simpler explanation as to how it works under the GRSP interpretation.

\subsection{The RSP Model for SimRank}

The Random Surfer-Pair interpretation for Simrank is based on a random
experiment involving two random walks (or surfers) starting at the
given nodes $a$ and $b$, and traversing the graph \emph{backwards}
until they meet. That is, at the end of each step, each walk transitions
to a randomly chosen in-neighbor. If either of the current nodes have
no in-neighbors, the experiment is stopped.
\begin{defn*}
Let $L(a,b)$ be the random variable that gives the number of steps
taken until the surfers meet starting from $a$ and $b$ respectively.
The \emph{expected }$f$\emph{-meeting distance} between $a$ and
$b$ is defined for a given function $f$ as $\mathbb{E}\left[f\left(L\left(a,b\right)\right)\right]$.

The $f$-meeting distance can be viewed as a score resulting from
each instance of the experiment, and is itself a random variable.
It turns out that for a specific choice of $f$, the expected score
is nothing but the SimRank of $\left(a,b\right)$. This equivalence
of the Random Surfer-Pair formulation and the recursive form in equation
\ref{eq:sr_rec} are proved in \cite{simrank} :
\end{defn*}
\begin{thm*}
SimRank as defined by Equation (\ref{eq:sr_rec}) is the same as the
expected $f$\emph{-meeting distance} between $a$ and $b$ for $f(t)=C^{t}$,
where $0<C<1$.
\end{thm*}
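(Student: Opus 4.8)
The plan is to define the random-experiment score $\tilde{s}(a,b):=\mathbb{E}\left[C^{L(a,b)}\right]$ and to show that it satisfies exactly the defining relations of SimRank — the two base cases together with the recursion \eqref{eq:sr_rec} — after which uniqueness of the solution forces $\tilde{s}=s$. Since $0<C<1$, I would adopt the convention $C^{\infty}=0$, so that $C^{L(a,b)}\in[0,1]$ always and $\tilde{s}$ is well defined and bounded; when the two surfers never meet (in particular when the experiment halts because a current node has no in-neighbors) we set $L(a,b)=\infty$, contributing $0$ to the expectation.

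First I would dispose of the base cases. If $a=b$, the surfers start together, so $L(a,b)=0$ and $\tilde{s}(a,a)=C^{0}=1$, matching maximal self-similarity. If $a\neq b$ and one of the nodes has no in-neighbors, the experiment stops before the surfers can ever coincide, so $L(a,b)=\infty$ and $\tilde{s}(a,b)=0$, matching SimRank's unavailability base case.

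The heart of the argument is a first-step (conditional expectation) analysis for the generic case $a\neq b$ with both in-neighborhoods nonempty. Because the two walks move independently and uniformly, the first pair of transitions lands at $\left(I_{i}(a),I_{j}(b)\right)$ with probability $\tfrac{1}{|I(a)||I(b)|}$ for each pair $(i,j)$. Conditioned on this, the Markov property of the paired walk gives $L(a,b)=1+L\left(I_{i}(a),I_{j}(b)\right)$, where the subcase $I_{i}(a)=I_{j}(b)$ (an immediate meeting at step $1$) is handled correctly since then $L\left(I_{i}(a),I_{j}(b)\right)=0$. Taking expectations and pulling out the common factor $C$ yields
\begin{equation}
\tilde{s}(a,b)=\frac{C}{|I(a)||I(b)|}\sum_{i=1}^{|I(a)|}\sum_{j=1}^{|I(b)|}\tilde{s}\left(I_{i}(a),I_{j}(b)\right),\nonumber
\end{equation}
which is precisely the recursion \eqref{eq:sr_rec}.

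Finally I would invoke uniqueness to conclude $\tilde{s}=s$, and this is the step I expect to be the main obstacle. The cleanest route is to observe that the SimRank relations pin the diagonal entries to $1$ and the degenerate off-diagonal entries to $0$, while updating every remaining entry as an average of other entries scaled by $C$; viewed as an operator on the bounded functions that agree on these pinned coordinates, this map is a contraction of modulus $C<1$ in the sup-norm, so by the Banach fixed-point theorem it admits a unique bounded fixed point. Both $s$ and $\tilde{s}$ are such fixed points, hence they coincide. Equivalently, one may appeal directly to the existence-and-uniqueness result for the SimRank equations established in \cite{simrank}. The care needed here is to verify that the contraction acts only on the genuinely free coordinates and that $\tilde{s}$ indeed lies in the bounded space on which the contraction estimate holds.
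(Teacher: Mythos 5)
Your proposal is correct and takes essentially the same route as the paper: the paper itself defers this specific theorem to the original SimRank reference, but its proof of the generalized statement (Theorem \ref{thm:equivalence}) follows exactly your two steps — show that the expected $f$-meeting distance satisfies the recursion by decomposing on the first transition (the paper phrases this as a bijection on walks rather than a conditional expectation, which is the same computation), then conclude by uniqueness. Your sup-norm contraction argument for uniqueness is precisely the paper's Theorem \ref{thm:uniq} (the $M\leq CM$ bound), so there is no gap.
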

If the experiment is stopped because of unavailability of neighbors,
$L\left(a,b\right)$ is considered to be infinite, thus making the
$f$-meeting distance zero for that run of the experiment. Also, the
base case of maximal self-similarity follows naturally because if
$a=b$, the surfers deterministically meet at time $t=0$, giving
a score of 1 always.

In this interpretation, two nodes are similar if they are close to
some source(s) of similarity. As we will discuss, various measures
differ in what they consider to be sources of similarity and how much
weight is assigned to them.

\section{Related Work}

An RSP interpretation has already been made for PSimRank, and it was
in fact the way in which the measure was developed\cite{psimrank}.
PSimRank is defined the same way as the RSP model for SimRank, except
that the transition probabilities for the surfers are modified so
that they are more likely to meet at a common in-neighbor in the next
step based on the number of such common in-neighbors at their current
positions.

The improvements to Simrank are justified using this RSP model, and
the recursive form in Eqn. \ref{eq:psim_rec} is also derived from
it. However, unlike this work, the RSP model is not discussed as a
general framework, and no general connections between the recursive
forms and the RSP model are established.

We will arrive at the same RSP model later when we apply the GRSP
model to PSimRank in Section \ref{subsec:PSimRank-grsp}.

\section{Generalizing the RSP Model}

For defining the GRSP model, we treat the Random Surfer-Pair experiment
as a single random walk, but on a compound state space that consists
of vertex pairs from $V\times V$ \cite{simrank} to indicate the
positions of both surfers, and also a ``stopped'' state, which represents
unavailability. We use the letter $h$ to denote a typical state from
this space, which we denote by $\mathcal{S}$. 

The transition probabilities for this random walk are denoted as $p(h^{'}\,|\,h)$,
the probability of transitioning to $h^{'}$ from $h$. The stopped
state is an absorbing state, that is once the state is reached, it
is impossible to leave. We can collect these probabilities into a
matrix $\mathbf{P}$.

The idea is that different measures can be realized for different
choices of transition probabilities, formalized by the following definition
of the GRSP model :
\begin{defn}
\label{def:gen_rsp_model}For a particular matrix of transition probabilities
$\mathbf{P}$, consider the following combined random walk experiment
over the compound state space $\mathcal{S}$ starting from $\left(a,b\right)$
at time $t=0$:

\begin{itemize}
\item If the current state of the walk is $h$, the walk moves to the next
state with probability $p(h^{'}\,|\,h)\,\forall\,h^{'}\in\mathcal{S}$
as specified by $\mathbf{P}$.
\item The walk ends when a state of the form $\left(x,x\right)$ is reached,
for some $x\in V$.
\end{itemize}
Let $L(a,b)$ be the random variable that gives the number of steps
taken until the walk ends. The \emph{expected }$f$\emph{-meeting
distance} between $a$ and $b$ is defined for this combined walk
for the function $f(t)=C^{t}$ as $\mathbb{E}\left[f\left(L\left(a,b\right)\right)\right]$
with $0<C<1$. This is a function of $\left(a,b\right)$ which we
call the \emph{similarity measure induced by} $\mathbf{P}$ under
the Generalized Random Surfer-Pair model.
\end{defn}
One can think of other choices for $f$ to encode different extents
of decay of similarity, such as $f(t)=\frac{C}{t^{2}}$. However,
the specific choice of $f(t)=C^{t}$ is what leads to the useful properties
we discuss below.

The termination condition is equivalent to the random surfers meeting
at some node for the first time. If the walk goes into the stopped
state, it stays there forever, and does not reach a state of the form
$\left(x,x\right)$. This gives an infinite number of steps, thus
leading to a score of zero. Again, the base case of maximal self-similarity
applies here as well.

\section{Equivalence to Recursive Form\label{sec:Equivalence-to-recursive}}

An interesting observation to be made is that the coefficients of
any $s(a^{'},b^{'})$ in the recursive formulation of SimRank (Equation
\ref{eq:sr_rec}) are the same as the transition probabilities for
the surfers in the Random Surfer-Pair Model going from $\left(a,b\right)$
to $(a^{'},b^{'})$. This leads one to believe there could be a similar
relationship for any transition probabilities $\mathbf{P}$. Indeed,
this is true and the results are formally presented in the remainder
of this section.

For a given transition probability matrix $\mathbf{P}$, consider
the following set of recursive equations defined for all node pairs
$\left(a,b\right)$ :
\begin{equation}
s\left(a,b\right)=C\sum_{(a^{'},b^{'})\in\mathcal{R}\left(\left(a,b\right)\right)}p\left((a^{'},b^{'})\,\middle|\,\left(a,b\right)\right)s(a^{'},b^{'})\label{eq:general_rec}
\end{equation}

Here, $\mathcal{R}\left(\left(a,b\right)\right)\subseteq V\times V$
is a region of support (which we will also refer to as \emph{support
set}) for $\left(a,b\right)$ under $\mathbf{P}$, that is where the
transition probability $p\left((a^{'},b^{'})\,\middle|\,\left(a,b\right)\right)$
is non-zero. Note that this \emph{does not} include the stopped state,
which means the sum of the coefficients appearing in the above equation
need not be 1 (of course, they have to be less than 1). 

The same base case of $s\left(a,a\right)=1\,\forall\,a\in V$ is used.
If $\mathcal{R}\left(\left(a,b\right)\right)=\left\{ \phi\right\} $,
$s\left(a,b\right)$ is taken to be zero unless $a=b$. These equations
define what is called the \emph{recursive similarity measure induced
by }$\mathbf{P}$.

An iterative form is also defined:
\begin{equation}
s_{k+1}\left(a,b\right)=C\sum_{(a^{'},b^{'})\in\mathcal{R}\left(\left(a,b\right)\right)}p\left((a^{'},b^{'})\,\middle|\,\left(a,b\right)\right)s_{k}(a^{'},b^{'})\label{eq:general_iter}
\end{equation}

The following results establish the mathematical soundness of the
GRSP model.
\begin{thm}
\label{thm:convergence}The following results hold true for the iterative
form :

\begin{itemize}
\item \textbf{Monotonicity and boundedness} :
\[
0\leq s_{k}\left(a,b\right)\leq s_{k+1}\left(a,b\right)\leq1\qquad\forall\,\left(a,b\right)\in V\times V
\]
\item \textbf{Convergence to limit} : The sequence $s_{k}\left(a,b\right)$
converges to a limit (obviously between 0 and 1 by the previous part)
for all $\left(a,b\right)\in V\times V$.
\end{itemize}
\end{thm}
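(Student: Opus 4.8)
The plan is to establish the three ingredients — non-negativity, the upper bound, and monotonicity — by induction on $k$, after which convergence is immediate from the Monotone Convergence Theorem for real sequences. Throughout I treat the self-pairs as fixed by the base case, so $s_k(x,x)=1$ for every $k$ and every $x\in V$, and I iterate only over distinct pairs, initializing $s_0(a,b)=0$ for $a\neq b$. It is convenient to view the right-hand side of Equation (\ref{eq:general_iter}) as an operator $T$ acting on the finite-dimensional vector of similarity values indexed by distinct pairs: writing $(T\mathbf{s})(a,b)=C\sum_{(a',b')\in\mathcal{R}((a,b))}p((a',b')\mid(a,b))\,s(a',b')$ with the value $1$ substituted whenever $(a',b')$ is a self-pair, the iteration reads $s_{k+1}=T s_k$. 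Because the coefficients $C\,p(\cdot\mid\cdot)$ are all non-negative, $T$ is order-preserving: if $\mathbf{u}\leq\mathbf{v}$ entrywise then $T\mathbf{u}\leq T\mathbf{v}$.

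First I would prove non-negativity and the bound $s_k\le 1$ simultaneously by induction. The base case holds since $s_0\in\{0,1\}$. For the inductive step, assuming $0\le s_k(a',b')\le 1$ for every pair, non-negativity of $s_{k+1}(a,b)$ is immediate as it is a non-negative combination of non-negative terms. For the upper bound I would use the crucial observation, noted after Equation (\ref{eq:general_rec}), that the coefficients sum to at most $1$ because the stopped state is excluded from $\mathcal{R}((a,b))$; hence
\[
s_{k+1}(a,b)=C\sum_{(a',b')}p((a',b')\mid(a,b))\,s_k(a',b')\le C\sum_{(a',b')}p((a',b')\mid(a,b))\le C<1,
\]
and for self-pairs $s_{k+1}(a,a)=1$ by the base case, so the bound holds in all cases.

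Next, monotonicity. The base step $s_0\le s_1$ holds because $s_1$ is a sum of non-negative terms while $s_0=0$ on distinct pairs (and both equal $1$ on self-pairs). For the inductive step I would subtract two consecutive iterates and exploit linearity:
\[
s_{k+1}(a,b)-s_k(a,b)=C\sum_{(a',b')\in\mathcal{R}((a,b))}p((a',b')\mid(a,b))\,\bigl[s_k(a',b')-s_{k-1}(a',b')\bigr]\ge 0,
\]
where the inequality uses the induction hypothesis $s_k\ge s_{k-1}$ together with $p\ge 0$; equivalently, this is just the statement that the order-preserving map $T$ sends the inequality $s_{k-1}\le s_k$ to $s_k\le s_{k+1}$. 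This chaining through $T$ is really the heart of the argument.

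Finally, for each fixed distinct pair $(a,b)$ the scalar sequence $\{s_k(a,b)\}_k$ is non-decreasing and bounded above by $1$, so by the Monotone Convergence Theorem it converges to a limit in $[0,1]$; on self-pairs the sequence is constantly $1$ and trivially converges. I expect the only real subtlety, rather than a genuine obstacle, to lie in the bookkeeping around the self-pairs and the empty-support pairs: one must verify that the convention $s_k(x,x)=1$ is consistent with the operator viewpoint and that the two base cases ($s(a,a)=1$ and $s(a,b)=0$ when $\mathcal{R}((a,b))=\{\phi\}$) do not break the monotonicity base step. Once the order-preserving property of $T$ and the sub-stochasticity of its coefficient rows are in hand, everything else is routine.
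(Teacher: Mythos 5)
Your proposal is correct and follows essentially the same route as the paper's own proof: induction on $k$ using the non-negativity and sub-stochasticity of the coefficient rows for boundedness, the differenced recursion for monotonicity, and the monotone-bounded-sequence theorem for convergence. The operator-$T$ framing and the explicit care with self-pairs and empty-support pairs are harmless elaborations of the same argument.
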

\begin{proof}
The monotonicity and boundedness are proved by induction. The inductive
hypothesis is that\textbf{
\[
0\leq s_{k-1}\left(a,b\right)\leq s_{k}\left(a,b\right)\leq1\qquad\forall\,\left(a,b\right)\in V\times V
\]
}

The base case of this for $k=0$ is trivial since $s_{0}\left(x,x\right)=1$
and $s_{0}\left(a,b\right)=0$ $\forall\,a\neq b$, and so is the
case with $a=b$ . The inductive step is as follows :

\textbf{Monotonicity :} We have
\begin{multline*}
s_{k+1}\left(a,b\right)-s_{k}\left(a,b\right)=C\times\\
\sum_{(a^{'},b^{'})\in\mathcal{R}\left(\left(a,b\right)\right)}p\left((a^{'},b^{'})\,\middle|\,\left(a,b\right)\right)\left[s_{k}(a^{'},b^{'})-s_{k-1}(a^{'},b^{'})\right]
\end{multline*}

But $s_{k}(a^{'},b^{'})-s_{k-1}(a^{'},b^{'})\geq0$ by the inductive
hypothesis, and $p((a^{'},b^{'})\,|\,\left(a,b\right))\geq0$ since
it is a probability, thus proving the monotonicity.

\textbf{Boundedness :} From the inductive hypothesis, $0\leq s_{k}\left(a,b\right)\leq1$.
Therefore,
\begin{eqnarray*}
s_{k+1}\left(a,b\right) & = & C\sum_{(a^{'},b^{'})\in\mathcal{R}\left(\left(a,b\right)\right)}p\left((a^{'},b^{'})\,\middle|\,\left(a,b\right)\right)s_{k}(a^{'},b^{'})\\
 & \leq & C\sum_{(a^{'},b^{'})\in\mathcal{R}\left(\left(a,b\right)\right)}p\left((a^{'},b^{'})\,\middle|\,\left(a,b\right)\right)\cdot1\\
 & \leq & C\:\leq\:1
\end{eqnarray*}

Where we have used the fact that $\sum_{(a^{'},b^{'})\in\mathcal{R}\left(\left(a,b\right)\right)}p((a^{'},b^{'})\,|\,\left(a,b\right))\leq1$,
since it is a sum of transition probabilities out of $\left(a,b\right)$
(possibly less than one because of the stopped state). Similarly,
it can be shown that $s_{k+1}\left(a,b\right)\geq0$.

\textbf{Convergence :} Since $s_{k}\left(a,b\right)$ is bounded and
non-decreasing, by the Completeness Axiom of Calculus, $s_{k}\left(a,b\right)$
converges to a limit $\forall\,\left(a,b\right)\in V\times V$, which
we denote by $g\left(a,b\right)$. Of course, this limit must be between
0 and 1 as the sequence itself is bounded in that range.
\end{proof}
From the above, the following result follows :
\begin{thm}
\label{thm:uniq}There exists a unique solution to the system of equations
defined by equation \ref{eq:general_rec}.
\end{thm}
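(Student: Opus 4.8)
The plan is to separate the statement into existence and uniqueness, getting existence essentially for free from Theorem~\ref{thm:convergence} and handling uniqueness with a sup-norm contraction argument driven by $C<1$. For existence, the candidate solution is the limit $g(a,b)$ whose existence Theorem~\ref{thm:convergence} already guarantees; the only task is to verify that $g$ actually satisfies the recursive system~(\ref{eq:general_rec}). For uniqueness, I would show that the difference of any two solutions has sup-norm zero.

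For existence, I would fix a pair $(a,b)$ and let $k\to\infty$ on both sides of the iterative equation~(\ref{eq:general_iter}). The crucial point is that the support set $\mathcal{R}((a,b))$ is a \emph{finite} subset of $V\times V$, so the right-hand side is a finite linear combination of the convergent sequences $s_k(a',b')$ with fixed nonnegative coefficients $C\,p((a',b')\mid(a,b))$. A finite sum of convergent sequences converges to the sum of the limits, so the right-hand side tends to $C\sum_{(a',b')\in\mathcal{R}((a,b))}p((a',b')\mid(a,b))\,g(a',b')$ while the left-hand side tends to $g(a,b)$; hence $g$ solves~(\ref{eq:general_rec}). Since $s_k(a,a)=1$ at every iteration, the base case $g(a,a)=1$ is inherited in the limit, so $g$ is a bona fide solution.

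For uniqueness, suppose $s$ and $s'$ both solve~(\ref{eq:general_rec}) with the common base case $s(a,a)=s'(a,a)=1$, and set $M=\max_{(a,b)\in V\times V}\lvert s(a,b)-s'(a,b)\rvert$, which is attained because $V\times V$ is finite. The two solutions agree on the diagonal, so the maximum can only be achieved off-diagonal. Subtracting the two instances of~(\ref{eq:general_rec}) for an off-diagonal pair and applying the triangle inequality, the key estimate I would establish is
\begin{align*}
\lvert s(a,b)-s'(a,b)\rvert
&\le C\sum_{(a',b')\in\mathcal{R}((a,b))}p((a',b')\mid(a,b))\,\lvert s(a',b')-s'(a',b')\rvert\\
&\le C\,M,
\end{align*}
where each difference is bounded by $M$ and $\sum_{(a',b')}p((a',b')\mid(a,b))\le1$ is used. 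Taking the maximum of the left-hand side over all pairs yields $M\le C\,M$, and since $0<C<1$ this forces $M=0$, i.e.\ $s\equiv s'$.

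I expect no deep obstacle here; the delicate points are bookkeeping. In the existence step the interchange of limit and summation must be justified, and it is legitimate only because each $\mathcal{R}((a,b))$ is finite. In the uniqueness step the real leverage comes from choosing the sup-norm over \emph{all} pairs: this is what converts the per-pair bound into the single scalar inequality $M\le C\,M$ that the strict bound $C<1$ can collapse to zero. The one subtlety worth stating explicitly is that any diagonal target $(x,x)\in\mathcal{R}((a,b))$ contributes the identical constant $C\,p((x,x)\mid(a,b))$ to both $s$ and $s'$ and so cancels in the difference, which keeps the estimate valid even when the support set meets the diagonal.
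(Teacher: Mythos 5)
Your uniqueness argument is essentially identical to the paper's: both take two solutions, look at the maximum absolute difference $M$ over the finite set $V\times V$, dispose of the diagonal case via the base condition, and use the triangle inequality together with $\sum p\le 1$ to get $M\le CM$, hence $M=0$ since $0<C<1$. The one place you go beyond the paper is existence: the paper merely asserts at the end that the unique solution ``can be obtained as the limit of the iterative form,'' whereas you explicitly pass to the limit in Equation~(\ref{eq:general_iter}) using the finiteness of each $\mathcal{R}((a,b))$ to verify that the limit $g$ from Theorem~\ref{thm:convergence} actually satisfies~(\ref{eq:general_rec}); that is a small but genuine gap-fill, and the rest is the same proof.
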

\begin{proof}
Let two solutions to Equation \ref{eq:general_rec} be $s_{1}$ and
$s_{2}$. Define their difference

\[
\delta\left(a,b\right)=s_{1}\left(a,b\right)-s_{2}\left(a,b\right)
\]

Let $M$ be the maximum absolute value of $\delta$, that is $\max_{\left(a,b\right)}\left|\delta\left(a,b\right)\right|$.
Let this maximum value be achieved for $\left(a,b\right)$, that is
$M=\left|\delta\left(a,b\right)\right|$. If $a=b$, then clearly
$M=0$ as both $s_{1}$ and $s_{2}$ must satisfy the maximal self-similarity
base case. Otherwise, we have
\begin{multline*}
M=\left|C\sum_{(a^{'},b^{'})\in\mathcal{R}\left(\left(a,b\right)\right)}p\left((a^{'},b^{'})\,\middle|\,\left(a,b\right)\right)\left[s_{1}(a^{'},b^{'})-\right.\right.\\
\Biggl.\left.s_{2}(a^{'},b^{'})\right]\Biggr|\leq C\sum_{(a^{'},b^{'})\in\mathcal{R}\left(\left(a,b\right)\right)}p\left((a^{'},b^{'})\,\middle|\,\left(a,b\right)\right)\left|\delta(a^{'},b^{'})\right|\\
\leq C\sum_{(a^{'},b^{'})\in\mathcal{R}\left(\left(a,b\right)\right)}p\left((a^{'},b^{'})\,\middle|\,\left(a,b\right)\right)M\leq CM
\end{multline*}

Here, we have used the fact that since $\left(a,b\right)$ maximizes
$\left|\delta\left(\cdot,\cdot\right)\right|$, $\left|\delta(a^{'},b^{'})\right|\leq M$
and again the fact that $\sum_{(a^{'},b^{'})\in\mathcal{R}\left(\left(a,b\right)\right)}p((a^{'},b^{'})\,|\,\left(a,b\right))\leq1$.

Now, since $M$ is an absolute value, and $M\leq CM$ with $0<C<1$,
we must have $M=0$. This proves that $s_{1}$and $s_{2}$ are always
the same. Thus, there exists a unique solution to the recursive form
of Equation \ref{eq:general_rec}, and that solution can be obtained
as the limit of the iterative form.
\end{proof}
Which leads to our central result :
\begin{thm}
\label{thm:equivalence}The similarity measure induced by $\mathbf{P}$
according to definition \ref{def:gen_rsp_model} is the same as the
recursive similarity measure induced by $\mathbf{P}$ (Equation \ref{eq:general_rec}).
\end{thm}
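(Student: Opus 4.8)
The plan is to show that the measure induced by $\mathbf{P}$ in the sense of Definition \ref{def:gen_rsp_model}, which I will write as $\sigma(a,b) := \mathbb{E}\left[C^{L(a,b)}\right]$, is itself a solution of the recursive system \ref{eq:general_rec}. Since Theorem \ref{thm:uniq} guarantees that this system has a unique solution, and that solution is by definition the recursive similarity measure induced by $\mathbf{P}$, it will follow that $\sigma$ coincides with it, which is exactly the claimed equivalence.

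First I would check that $\sigma$ is well defined and satisfies the base case. Because $0 < C < 1$ and $L(a,b)$ takes values in $\{0,1,2,\dots\} \cup \{\infty\}$, the quantity $C^{L(a,b)}$ is bounded in $[0,1]$ under the convention $C^{\infty} = 0$, so $\sigma(a,b)$ exists and lies in $[0,1]$. When $a = b$ the combined walk is already in a diagonal state $(x,x)$ at $t = 0$, hence $L(a,a) = 0$ and $\sigma(a,a) = C^{0} = 1$, matching the maximal self-similarity base case of \ref{eq:general_rec}.

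The core step is a first-step (one-step conditioning) analysis for $a \neq b$. Conditioning on the destination of the first transition and invoking the Markov property, a trajectory that moves from $(a,b)$ to $(a',b')$ needs exactly $1 + L(a',b')$ steps to terminate, where the residual count has the law of a fresh walk started at $(a',b')$; trajectories that enter the stopped state contribute $C^{\infty} = 0$. Taking expectations, using $\mathbb{E}\left[C^{1+L(a',b')}\right] = C\,\sigma(a',b')$, and noting that only destinations in the support set $\mathcal{R}((a,b))$ carry positive probability, I obtain
\[
\sigma(a,b) = C \sum_{(a',b') \in \mathcal{R}((a,b))} p\bigl((a',b') \,\big|\, (a,b)\bigr)\, \sigma(a',b'),
\]
which is precisely \ref{eq:general_rec}. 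Thus $\sigma$ solves the recursive system, and uniqueness (Theorem \ref{thm:uniq}) forces $\sigma$ to equal the recursive similarity measure induced by $\mathbf{P}$.

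The step I expect to be the main obstacle is the rigorous justification of the decomposition $L(a,b) = 1 + L(a',b')$ conditioned on a first step to $(a',b')$. This rests on time-homogeneity together with the Markov property of the combined walk, which ensure that, given the walk is at $(a',b')$ after one step, the residual number of steps to reach a diagonal state has the same distribution as $L(a',b')$ for a walk started afresh there. The only accompanying care is bookkeeping the non-terminating outcomes: both trajectories that fall into the stopped state and any trajectory that wanders forever without ever reaching some $(x,x)$ correspond to $L = \infty$ and hence contribute $C^{\infty} = 0$, so they drop out of the expectation cleanly. Since the first-step sum ranges over a subset of the finite set $V \times V$ (plus the stopped state), no issue of interchanging expectation with an infinite summation arises.
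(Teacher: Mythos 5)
Your proposal is correct and follows essentially the same route as the paper: show that the expected $f$-meeting distance satisfies the recursion by decomposing on the first transition, then invoke the uniqueness result of Theorem \ref{thm:uniq}. The only difference is presentational --- the paper justifies the first-step decomposition by explicitly enumerating the walks in $\mathcal{W}_{a,b}$ and exhibiting a bijection with $\bigcup_{(a',b')\in\mathcal{R}((a,b))}\mathcal{W}_{a',b'}$, whereas you appeal directly to the Markov property and time-homogeneity, which amounts to the same argument.
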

\begin{proof}
We first need to show that the expected $f$-meeting distances, for
which we overload the notation $s\left(a,b\right)$ satisfy the recursive
form. Let $\mathcal{W}_{a,b}$ be the set of all compound walks from
$\left(a,b\right)$ to a state of the form $\left(x,x\right)$. Let
$l\left(w\right)$ denote the length of such a walk $w$, and $p\left(w\right)$
the total probability, which is the product of the probabilities of
the individual transitions. Then, by definition of the expected $f$-meeting
distance,
\begin{equation}
s\left(a,b\right)=\sum_{w\in\mathcal{W}_{a,b}}p\left(w\right)C^{l\left(w\right)}\label{eq:fmdproof}
\end{equation}

Now, consider the set of all such compound walks from one step ahead,
that is, $\bigcup_{(a^{'},b^{'})\in\mathcal{R}\left(\left(a,b\right)\right)}\mathcal{W}_{a^{'},b^{'}}$.
Note that all the individual sets $\mathcal{W}_{a^{'},b^{'}}$ are
disjoint. Now, clearly this collection of walks differs from $\mathcal{W}_{a,b}$
only in the inclusion of the first transition to some $(a^{'},b^{'})$.
Therefore, a bijection exists between this set and $\mathcal{W}_{a,b}$,
and so $\mathcal{W}_{a,b}$ can be enumerated in terms of the new
collection. \\
\\
This means that for every member $w$ of $\mathcal{W}_{a,b}$, there
is some unique $(a^{'},b^{'})$ and some unique $w^{'}\in\mathcal{W}_{a^{'},b^{'}}$.
Thus, it is possible to group the terms of the summation in Equation
\ref{eq:fmdproof} by $(a^{'},b^{'})$. Now, the corresponding $w^{'}$
will have one step fewer, so $l(w^{'})+1=l\left(w\right)$, and it
omits the transition probability for the step from $\left(a,b\right)$
to $(a^{'},b^{'})$, so 
\[
p\left(w\right)=p\left((a^{'},b^{'})\,\middle|\,\left(a,b\right)\right)p\left(w^{'}\right)
\]

Thus, Equation \ref{eq:fmdproof} is rewritten as:
\begin{multline*}
s\left(a,b\right)=\\
\sum_{(a^{'},b^{'})\in\mathcal{R}\left(\left(a,b\right)\right)}\sum_{w^{'}\in\mathcal{W}_{a^{'},b^{'}}}p\left((a^{'},b^{'})\,\middle|\,\left(a,b\right)\right)p\left(w^{'}\right)C^{l\left(w^{'}\right)+1}\\
=C\sum_{(a^{'},b^{'})\in\mathcal{R}\left(\left(a,b\right)\right)}p\left((a^{'},b^{'})\,\middle|\,\left(a,b\right)\right)\sum_{w^{'}\in\mathcal{W}_{a^{'},b^{'}}}p\left(w^{'}\right)C^{l\left(w^{'}\right)}
\end{multline*}

But, by definition, $s(a^{'},b^{'})=\sum_{w^{'}\in\mathcal{W}_{a^{'},b^{'}}}p\left(w^{'}\right)C^{l\left(w^{'}\right)}$.

This completes the proof that the expected $f$-meeting distances
satisfy the recursive form of Equation \ref{eq:general_rec}. By the
uniqueness result of Theorem \ref{thm:uniq}, it follows that this
is the same as the solution that can be arrived as a limit of the
iterative form, thus establishing the equivalence of the Generalized
Random Surfer-Pair model and its recursive form.
\end{proof}
This is the result necessary to convert an existing recursive form
into a Random Surfer-Pair Model. All that needs to be done is to read
off the non-zero coefficients into the appropriate places into the
matrix, or equivalently get the support set and the corresponding
transition probabilities as a function of $\left(a,b\right)$. One
thing to note here is that the probabilities corresponding to \emph{actual}
node pair destinations from $\left(a,b\right)$ need not sum to 1,
because it could go into the stopped state as well.

It would of course be more illustrative to get a concise description
of the matrix. For SimRank for instance, $\mathcal{R}\left(\left(a,b\right)\right)$
is simply all neighbor pairs of $a$ and $b$ with the transition
probabilities being uniform over this set. If either node has no in-neighbors,
it transitions to the stopped state with probability 1 (i.e unavailable).

Our formulation allows for transitions from one compound state to
any arbitrary state, and any number of destination states with non-zero
transition probability. However, in existing measures, there are only
a few possible transitions from any given state $\left(a,b\right)$
compared to the total number of possible states, that too involving
the neighbor pairs of $a$ and $b$ (on practical graphs which are
not densely connected). This means that $\mathbf{P}$ is sparse, so
even though there are $\left|V\right|^{2}$ states, only a few of
them are involved in transitions from any given state, and it is no
more complicated than the existing recursive formulations. However,
we note that the above results continue to hold for any $\mathbf{P}$
regardless of sparsity.

\section{Application to Other Measures}

In this section, we demonstrate the usage of the GRSP model by applying
it to PSimRank (Equation \ref{eq:psim_rec}), P-Rank (Equation \ref{eq:prank_rec})
and SimRank{*} (Equation \ref{eq:simrank-star-rec}) as discussed
in the previous section. Throughout this exercise, any ``unallocatable''
probability due to e.g unavailability of in-neighbors or out-neighbors
is given to the stopped state.

\subsection{PSimRank\label{subsec:PSimRank-grsp}}

\paragraph*{Support Set}

$\mathcal{R}\left(\left(a,b\right)\right)$ is divided into 3 disjoint
subsets: $\left\{ \left(x,x\right),\,x\in I\left(a\right)\cap I\left(b\right)\right\} $,
$\left(I\left(a\right)\setminus I\left(b\right)\right)\times I\left(b\right)$,
$I\left(a\right)\times\left(I\left(b\right)\setminus I\left(a\right)\right)$.
Note that parts of $I\left(a\right)\times I\left(b\right)$ are missing
from this set unlike SimRank.

\paragraph*{Transition Probabilities}

The 3 subsets mentioned above are given total probabilities of $\frac{|I\left(a\right)\cap I\left(b\right)|}{|I\left(a\right)\cup I\left(b\right)|}$
(note that this is the the Jaccard coefficients of in-neighbor sets),
$\frac{|I\left(a\right)\setminus I\left(b\right)|}{|I\left(a\right)\cup I\left(b\right)|}$
and $\frac{|I\left(b\right)\setminus I\left(a\right)|}{|I\left(a\right)\cup I\left(b\right)|}$
respectively, and probabilities for individual elements are uniform
over each set. This is exactly the RSP model presented in \cite{psimrank}.

\subsection{P-Rank}

\paragraph*{Support Set}

$\mathcal{R}\left(\left(a,b\right)\right)$ now has two disjoint parts:
$I\left(a\right)\times I\left(b\right)$ and $O\left(a\right)\times O\left(b\right)$.

\paragraph*{Transition Probabilities}

The two parts of $\mathcal{R}\left(\left(a,b\right)\right)$ are given
total probabilities of $\lambda$ and $1-\lambda$, and just like
PSimRank, probabilities for individual elements are uniform over each
set. These probabilities can be interpreted as follows: a coin with
probability $\lambda$ is tossed, and based on its result, \emph{both}
surfers move backward or forward and choose from applicable edges
uniformly. This scheme is illustrated in Figure \ref{fig:prank-grsp}.
Transitions where the surfers take different directions are not allowed.
Thus, a theoretical deficiency exists that causes it to discard path-pairs
containing such transitions as evidence for similarity. This is not
at all evident from its recursive form.

\begin{figure}[t]
\begin{centering}
\includegraphics[width=0.25\textwidth]{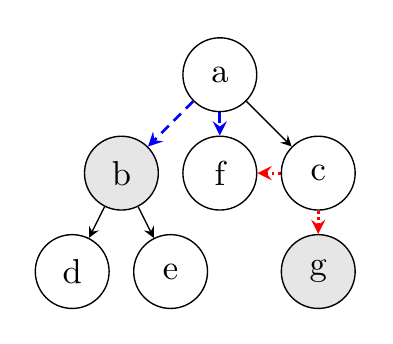}\includegraphics[width=0.25\textwidth]{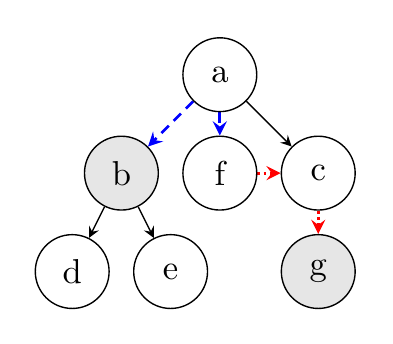}
\par\end{centering}
\caption{Left: An instance of the GRSP experiment for P-Rank with the surfers
starting at $b$ (dashed path) and $g$ (dotted path) and meeting
at $f$. Right: In this graph, the same path-pair is invalid because
the paths take opposite directions in the second step.\label{fig:prank-grsp}}

\end{figure}

\subsection{SimRank{*}}

\paragraph*{Support Set}

For SimRank{*}, $\mathcal{R}\left(\left(a,b\right)\right)=$$\left(\left\{ a\right\} \times I\left(b\right)\right)\cup\left(I\left(a\right)\times\left\{ b\right\} \right)$.
Note that the Cartesian products involve singleton sets.

\paragraph*{Transition Probabilities}

The two parts of $\mathcal{R}\left(\left(a,b\right)\right)$ are given
total probabilities of $\frac{1}{2}$ each, and again, individual
probabilities are uniform over their respective subsets. These transitions
are the same as tossing a fair coin, and based on the outcome, stepping
\emph{one} surfer to a uniformly chosen in-neighbor. An example of
this is shown in Figure \ref{fig:simstar}.

The notable feature here is that only one of the surfers is allowed
to move at each step. The choice as to which surfer moves is made
uniformly. From this, it becomes clear how SimRank{*} manages to consider
path-pairs of unequal length. The surfers need not have made an equal
number of jumps to meet at some node. This is a much more simpler
and intuitive explanation than the original analytic derivation in
\cite{simrank-star} by enumerating all such path pairs.

\begin{figure}[t]
\begin{centering}
\includegraphics[width=0.2\textwidth]{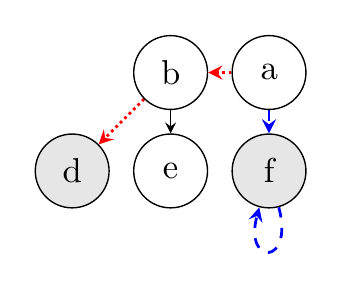}
\par\end{centering}
\caption{An instance of the GRSP experiment for SimRank{*} with the surfers
starting at $d$ (dotted path) and $f$ (dashed path) and meeting
at $a$, with their paths having different effective lengths. The
loop shows that surfer did not move that step.\label{fig:simstar}}
\end{figure}

\section{Monte Carlo Computation\label{sec:MC-Computation}}

A key benefit of the GRSP model is that it enables the use of Monte
Carlo methods for the entire class of measures that fall under this
model. This includes SimRank{*} and P-Rank, for which Monte Carlo
methods have not been used so far. It is quite straightforward to
apply: simulate the surfers' transitions starting from a given pair
of nodes for some number of times, and return the average score as
the similarity.

In practice, the surfers would have to be truncated after some number
of steps $L_{max}$, and only a limited number of samples $N_{S}$
can be drawn in the interest of fast querying, but decent guarantees
for accuracy in practice are shown in \cite{psimrank}. Radius based
pruning can be done to reduce the amount of nodes that need to be
considered for top-$k$ similarity queries, i.e restricting the search
to nodes at a particular distance (or radius) from the query node.

The various benefits of using Monte Carlo computation are as follows:
\begin{enumerate}
\item \textit{Complexity:} Typically, there is an easy ($O(1)$) way to
generate a transition from any given state. In SimRank{*} for instance,
all that needs to be done is to toss a coin, and advance one surfer
to a randomly chosen in-neighbor. Thus, the complexity of a single
similarity computation is just $O(N_{S}L_{max})$, where these quantities
are much smaller than the size of the network. In comparison, solving
the recursive equations takes at least $O(\left|V\right|^{2})$ even
when extensive optimizations can be made \cite{simrank-star}.
\item \textit{Memory efficiency:} The memory requirement is also low since
individual similarities can be computed without having to compute
and store all pairs of similarities ($O(N^{2})$ memory needed) as
is necessary for the recursive forms. More precisely, it is $O(1)$
for each similarity being queried. This is particularly important
when dealing with large networks (including citation networks). In
fact, even storing the iterates while solving the recursive forms
becomes infeasible for even medium sized networks (of the order of
$10^{6}$ nodes) by current standards.
\item \textit{Adaptability to changes in network structure:} The network
can be updated without having to recompute all similarities, which
is useful in settings where nodes are dynamically added or removed.
This is often the case, such as with social and citation networks,
which change on a regular basis. Solving the recursive forms after
each change would be highly impractical or even impossible.
\item \textit{Parallelizability:} Possibly the biggest advantage is the
extensive parallelizability; every instance of the RSP experiment
can be run separately and concurrently. Further, under the RSP model,
different similarities can be computed in parallel.
\end{enumerate}

\section{Designing Measures Under the GRSP Model}

Theorem \ref{thm:equivalence}, along with our other results, characterizes
a class of ``sensible'' measures that are non-negative, satisfy
the base case of maximal self similarity being 1, and are bounded.
It describes how such measures in the recursive form weight common
sources of similarity in their computation in their GRSP model. We
have also seen how existing measures can be better understood under
the GRSP model, and the computational benefits that result. These
observations highlight the utility of the GRSP model as a tool for
developing measures according to given design requirements, rather
than working with the recursive forms directly.

\subsection{Modes of Design}

The only mathematical requirement is a valid transition matrix $\mathbf{P}$,
i.e $\mathbf{P}$ needs to be doubly stochastic. Thus, designing measures
is a matter of allocating transition probabilities out of each compound
state. This could be done from first principles such as those underlying
the measures we have discussed so far, or existing measures could
be involved, i.e $\mathbf{P}$ could be derived from some other transition
matrices. For example, one might wish to combine existing measures
in order to create a measure with the desirable properties present
in each of the existing measures, an exercise we undertake in Section
\ref{subsec:psimstar}.

One natural way to combine measures is to take a convex combination
of the individual transition matrices, resulting in a transition matrix
of the form
\[
\mathbf{P}=\sum_{i=1}^{N}\lambda_{i}\mathbf{P}_{i}
\]
with $\mathbf{P}_{i},\,i=1\ldots N$ being transition matrices for
some given measures, and $\lambda_{i}$ are non-negative weights such
that $\sum_{i}\lambda_{i}=1$. In fact, it is easy to see that P-Rank
itself is such a combination. The two measures involved are SimRank
and Rvs-SimRank, as can be verified by using Theorem \ref{thm:equivalence}
to get the transition matrices from Equations \ref{eq:sr_rec}, \ref{eq:rvs_sr_rec}
and \ref{eq:prank_rec}. Indeed, the rationale behind P-Rank is to
account for both in-links as well as out-links, which are properties
of these two measures that are easily combined as described above.

\subsection{Domain Knowledge}

The GRSP framework also enables the use of domain knowledge in the
design process. We describe two possible ways in which this can be
accomplished:
\begin{itemize}
\item Knowledge about the network structure: Different parts of the network
could have different structures, and a mixture of behaviors of various
measures could be necessary to accurately capture similarity. One
possibility is to construct $\mathbf{P}$ based on independent transitions
for each component of the node pair following different measures based
on which part of the network the node belongs to, leading to probabilities
of the form
\[
p\left((a^{'},b^{'})\,\middle|\,\left(a,b\right)\right)=p_{1}\left(a^{'}\,\middle|\,a\right)p_{2}\left(b^{'}\,\middle|\,b\right)
\]
where the RHS terms are obtained by appropriately marginalizing from
two given transition matrices $\mathbf{P}_{1}$ and $\mathbf{P}_{2}$.
\item Node and Edge attributes and other data: these can be used to give
more weight to meaningful transitions that can be identified from
the additional information. Similarly, this could help identify and
prune out links that exist in the network but are not indicative of
similarity in any way. This could prove particularly beneficial when
utilizing features extracted from text data. For instance, \cite{cit-intens}
proposes a method to detect the intensity of references in scientific
articles based on their textual content, that is how important a reference
is to an article. This inferred attribute could be used to assign
lower probabilities to tangential references when computing similarity
as papers connected through such references can be very dissimilar.
\end{itemize}

\subsection{PSimRank{*} : The Best of Both Worlds\label{subsec:psimstar}}

In this section, we describe a measure designed based on two existing
measures using the GRSP framework. Later in section \ref{sec:Experiments},
the efficacy of this measure is empirically evaluated.

Previously, we have described how PSimRank solves the Pairwise Normalization
problem and SimRank{*} solves the Level-wise Computation problem,
and what these entail in the Random Surfer-Pair domain. With PSimRank{*},
we attempt to combine these two benefits under the GRSP framework,
resulting in a better measure because of solving both the problems.

The combination is straightforward; to make the surfers meet at a
common in-neighbor with probability equal to the Jaccard coefficient
just like in PSimRank, but the remainder of the time behave like SimRank{*},
moving only one at a time.

\paragraph*{Support Set}

$\mathcal{R}\left(\left(a,b\right)\right)=$$\left(\left\{ a\right\} \times I\left(b\right)\right)\cup\left(I\left(a\right)\times\left\{ b\right\} \right)\cup\left\{ \left(x,x\right),\,x\in I\left(a\right)\cap I\left(b\right)\right\} $.
The first two subsets are for behavior like SimRank{*}, and the third
subset is from PSimRank with both surfers stepping to a common in-neighbor.

\paragraph*{Transition Probabilities}

The third subset is allocated probability $\frac{|I\left(a\right)\cap I\left(b\right)|}{|I\left(a\right)\cup I\left(b\right)|}$,
and the remaining is divided equally among the other two. The same
scheme as before is used for individual probabilities.

The effect of this design is that node pairs which have a large number
of common neighbors (i.e a high value for the Jaccard coefficient
$\frac{|I\left(a\right)\cap I\left(b\right)|}{|I\left(a\right)\cup I\left(b\right)|}$)
will greatly increase the tendency of the measure to adopt PSimRank-like
behavior and jump to a common neighbor, thus solving the Pairwise
Normalization problem for these nodes. For other node pairs with fewer
common neighbors, this problem is not as severe, and the measure will
focus more on the Level-wise Computation problem by adopting SimRank{*}-like
behavior. This way, PSimRank{*} is expected to mitigate both problems
overall.

\section{Experiments\label{sec:Experiments}}

We compare the performance of PSimRank{*} against existing measures
on a real-world dataset. For P-Rank, a sweep over the $\lambda$ parameter
is performed in increments of 0.1 from 0 to 1 and the best performing
value of $\lambda$ is used for comparison. SimRank and rvs-SimRank
are the edge cases of this sweep with $\lambda=1$ and $\lambda=0$
respectively.

We use the Arnetminer dataset (\cite{arnetminer}), which is a citation
network of 2,244,021 papers and 4,354,534 citations extracted from
DBLP\footnote{\href{http://dblp.uni-trier.de/}{http://dblp.uni-trier.de/}}.
A portion of these papers have been manually annotated and given labels
corresponding to 10 different topics (clusters). The evaluation consists
of running a top-k similarity query on some randomly chosen labeled
nodes, and finding the Mean Average Precision (MAP) (\cite{nlpbook})
for the answer set having the same label as the query. The rationale
behind this is that papers in the same topic as the query are likely
to be similar.

For all measures used here, pruning was done to radius 4 (in the undirected
graph). The Random Surfer simulations were performed 200 times per
query, and truncated after at most 15 steps. Top-100 queries were
run on 50 randomly chosen labeled nodes that had at least 5 citations
and 5 references to ensure that the measures wouldn't become unavailable.
Since not all the nodes are labeled, only the nodes in the answer
set that have a label are considered for calculating the MAP scores.
Further, 50 such trials are performed and the averaged MAP scores
are reported in Table \ref{tab:sim-results}.

\begin{table}[tb]
\caption{MAP values attained by various measures on different datasets.\label{tab:sim-results}}
\centering{}%
\begin{tabular}{|c|c|c|c|}
\hline 
Measure & Arnetminer & Citeseer & Cora\tabularnewline
\hline 
\hline 
SimRank & 0.73 & 0.71 & 0.66\tabularnewline
\hline 
P-Rank($\lambda$=0.4) & 0.76 & 0.73 & 0.70\tabularnewline
\hline 
SimRank{*} & 0.80 & 0.67 & 0.62\tabularnewline
\hline 
PSimRank & 0.80 & 0.68 & 0.57\tabularnewline
\hline 
PSimRank{*} & 0.81 & 0.69 & 0.63\tabularnewline
\hline 
\end{tabular}
\end{table}

It is observed that PSimRank{*} outperforms all the other measures,
improving on both PSimRank and SimRank{*} on which it is based. In
networks much smaller than Arnetminer such as preprocessed versions
\cite{cora-reduced} of the Cora \cite{cora-orig} and CiteSeer \cite{citeseer-orig}
datasets which have only a few thousand nodes each, we found that
PSimRank{*} as well as SimRank{*} and PSimRank performed worse than
P-Rank (Table \ref{tab:sim-results}). However, even in these cases,
PSimRank{*} outperformed its predecessors. Thus, combining the benefits
of PSimRank and SimRank{*} under the GRSP interpretation is indeed
effective, improving on both the measures.

\section{Conclusions}

The GRSP model serves as a unifying framework for a class of similarity
measures based on the SimRank philosophy and subsumes several seemingly
disparate measures. Any properties that are discovered for this framework
would also apply to these measures. Admittedly, it is not all-encompassing;
it is not evident how it can be applied to MatchSim \cite{matchsim}
which uses weights obtained from a Maximum-Matching based scheme (which
are not constant), and CoSimRank \cite{cosimrank} which is based
on Personalized PageRank \cite{ppr}.

Reinterpreting existing measures (P-Rank and SimRank{*}) under this
framework has provided interesting insights about their functioning.
The benefits of Monte Carlo computation are also brought to this class
of measures. The development of PSimRank{*} under this framework,
improving on the measures from which it was derived, highlights the
potential of the GRSP Model to aid in the designing measures tailored
to various applications and domains. One exciting avenue for future
work is to use this framework to incorporate knowledge generated by
Machine Learning methods, such as document embeddings generated by
the state of the art Natural Language Processing methods. Hopefully,
this work has opened up possibilities for theoretical dissection and
development of more effective measures.

\section*{Acknowledgements}

The authors thank the Robert Bosch Centre for Data Science and Artificial
Intelligence, IIT Madras for providing the computational resources
necessary for this paper.

\bibliographystyle{plain}
\bibliography{refs}

\end{document}